\newcommand{\tr}{\mathrm{tr}}
\newcommand{\cC}{\mathcal{C}}
\newcommand{\cG}{\mathcal{G}}
\newcommand{\cP}{\mathcal{P}}
\newtheoremstyle{note}      % NAME
  {\topsep/2}              	% ABOVE SPACE
  {\topsep/2}            	% BELOW SPACE
  {}                        % BODY FONT
  {\parindent}             	% INDENT (empty value is the same as 0pt)
  {\itshape}                % HEAD FONT
  {.---}                    % HEAD PUNCTUATION
  {0pt}                     % HEAD SPACE
  {\thmname{#1}\thmnumber{ \itshape#2}\thmnote{ (#3)}} % CUSTOM-HEAD-SPEC
\newtheorem{theorem}{Theorem}
\theoremstyle{definition}
\newtheorem{definition}{Definition}
\theoremstyle{remark}
\begin{document}
\title{Geometric mean of bipartite concurrences as a genuine multipartite entanglement measure}

\author{Yinfei Li}
\affiliation{Key Laboratory of Advanced Optoelectronic Quantum Architecture and Measurement of Ministry of Education, School of Physics, Beijing Institute of Technology, Beijing 100081, China}

\author{Jiangwei Shang}
\email{jiangwei.shang@bit.edu.cn}
\affiliation{Key Laboratory of Advanced Optoelectronic Quantum Architecture and Measurement of Ministry of Education, School of Physics, Beijing Institute of Technology, Beijing 100081, China}

\date{\today}
%

%%%%%%%
\begin{abstract}
In this work we propose the geometric mean of bipartite concurrences as a genuine multipartite entanglement measure. This measure achieves the maximum value for absolutely maximally entangled states and has desirable properties for quantifying potential quantum resources. The simplicity and symmetry in the definition facilitates its computation for various multipartite entangled states including the GHZ states and the $W$ states. With explicit examples we show that our measure results in distinct entanglement orderings from other measures, and can detect differences in certain types of genuine multipartite entanglement while other measures cannot. These results justify the potential application of our measure for tasks involving genuine multipartite entanglement.
\end{abstract}

\maketitle
%

%%%%%%

\section{Introduction}%
Entanglement \cite{RevModPhys.81.865,DetectGUHNE20091} is one of the unique phenomena that distinguishes the quantum world and the classical one, which plays a central role in the advancing of quantum technologies. Most, if not all, quantum information processing tasks regard entanglement as an indispensable resource~\cite{RevModPhys.91.025001}, ranging from quantum communication and quantum enhanced metrology to quantum computing.

In the past few decades, how to quantify entanglement has received extensive attention \cite{PhysRevLett.78.2275,10.1063_1.1494474}. One property was sorted out as a necessity for all entanglement measures, i.e., monotonicity~\cite{PhysRevA.54.3824,doi:10.1080/09500340008244048}, that the amount of entanglement cannot increase under local operations and classical communication (LOCC). A stronger version of this requirement is that entanglement cannot increase on average under stochastic LOCC (SLOCC). For two-qubit entangled states, there exists only a single SLOCC equivalent class, and all the entanglement monotones agree on which state is more entangled~\cite{Singh:20,PhysRevLett.127.040403}.  However, the complexity of multipartite entanglement grows dramatically with the increasing number of parties. Even for three-qubit systems, it is impossible to obtain a consistent entanglement ordering through monotonicity~\cite{PhysRevA.62.062314}. Therefore, the understanding and quantification of multipartite entanglement is still in progress~\cite{PhysRevLett.127.140501}.

The notion of multipartite entanglement can be refined into the so-called genuine multipartite entanglement (GME)~\cite{PhysRevA.65.012107,G_hne_2005,PhysRevA.73.052319}. It is worth noting that genuinely entangled states can excel bipartite entangled states in tasks such as quantum secret sharing~\cite{PhysRevA.59.1829}, measurement-based quantum computation~\cite{Briegel2009} and quantum-enhanced measurements~\cite{doi:10.1126/science.1104149}, and is essential for quantum algorithms to outperform their classical counterparts \cite{doi:10.1098/rspa.2002.1097}. Being capable of generating GME states has been regarded as a benchmark for fault-tolerant quantum computing \cite{PRXQuantum.2.020304} and many-body experiments~\cite{doi.org/10.1038/s42254-018-0003-5,PhysRevLett.117.210502}. Numerous methods have been put forward to detect GME~\cite{2010,PhysRevLett.106.020405,PhysRevLett.104.210501,PhysRevA.83.040301, PhysRevLett.88.170405,PhysRevA.80.022109,PhysRevLett.112.140404}. Beyond witnessing the existence of GME, the utility of a genuinely entangled state also depends on the amount of GME it possesses, which naturally leads to the study of GME measures~\cite{PhysRevA.83.062325}.

\begin{definition}\label{def1}
A GME measure is an entanglement monotone that satisfies the following two conditions:
\begin{enumerate}
  \renewcommand{\theenumi}{(\alph{enumi})}
  \item The measure vanishes for all biseparable states.
  \item The measure is strictly positive for genuinely entangled states.
\end{enumerate}
\end{definition}

Substantial achievements have been made in the research of multipartite entanglement measures~\cite{PhysRevLett.106.190502,PhysRevA.67.012108,PhysRevA.82.032121,doi:10.1063/1.1497700,PhysRevA.69.062311,PhysRevA.79.062308}, but most of these measures do not fully satisfy Definition~\ref{def1}~\cite{PhysRevLett.127.040403}. Worst of all, very few GME measures are available. Here are three representative examples. The geometric measure~\cite{PhysRevA.68.042307} was first considered as the minimal distance between the target state and a product state, then the minimal distance from $k$-separable states was investigated as the generalized geometric measure (GGM)~\cite{PhysRevA.77.062304,PhysRevA.81.012308}, which is a GME measure when ${k=2}$. The GME-concurrence, also known as the genuinely multipartite concurrence (GMC)~\cite{PhysRevA.83.062325,PhysRevA.86.062303}, was introduced as the minimal bipartite concurrence with computable lower bounds. Recently, the concurrence fill~\cite{PhysRevLett.127.040403} was proposed as a three-qubit GME measure, which is in fact equal to the square root of the area of the three-qubit concurrence triangle. Attempts to generalize this measure have also been reported, see for instance Ref.~\cite{guo2021genuine}. However, all these measures cannot fully describe the nature of GME. Both the GGM and GMC rely on minimization arguments, which lose sight of the global distribution of entanglement among the parties. The concurrence fill takes all the bipartite entanglements into account, but is not scalable for more than three qubits.

In this work, we propose a GME measure that is qualified for the faithful certification of GME. Our measure has the virtue of discriminance, strong monotonicity, convexity, scalability, normalization, and smoothness. These features are favorable for acting as a potential quantum resource, and the simplicity in the definition makes it convenient for practical applications. The proof of strong monotonicity and determinance can be applied to the geometric mean of any bipartite entanglement monotones, and thus brings forth a family of GME measures. For certain multipartite pure states, the symmetry in our measure can simplify the computation, so we take the $n$-qubit Greenberger–Horne–Zeilinger (GHZ) states and $W$ states as a demonstration. In addition, we show that our measure leads to distinct entanglement orderings as compared to other GME measures, thus revealing its unique aspects as a GME measure.

%%%%%%
\section{Geometric mean of bipartite concurrences}%
Concurrence \cite{WoottersEntof2Q} is one of the most celebrated entanglement monotones defined for bipartite quantum states, which has a direct connection to the entanglement of formation in two-qubit cases \cite{PhysRevLett.80.2245}, and has been extended to higher dimensions~\cite{PhysRevA.64.042315,PhysRevLett.92.167902,PhysRevLett.95.040504,PhysRevA.74.052303}.
The generalized bipartite concurrence for a pure state $\ket{\psi}$ between its subsystems $A$ and $B$ is given by 
\begin{equation}\label{eq1}
  \cC_{AB}(\ket{\psi})=\sqrt{2\bigl(1-\tr\bigl(\rho_A^2\bigr)\bigr)}\,,
\end{equation}
where ${\rho_A=\tr_B{\bigl(\ket{\psi}\!\bra{\psi}\bigr)}}$. A regularized expression, of which we adopt in this work, can be written as
\begin{equation}\label{eq2}
  \cC_{AB}(\ket{\psi})=\sqrt{\frac{d_\text{min}}{d_\text{min}-1}\bigl(1-\tr\bigl(\rho_A^2\bigr)\bigr)}\,,
\end{equation}
where $d_\text{min}$ denotes the dimension of the smaller subsystem.
$\cC_{AB}(\ket{\psi})$ maximizes to $1$ when the reduced density matrix of the smaller subsystem is the maximally mixed state, and minimizes to $0$ when $\ket{\psi}$ is a product state.

Relying on the concept of bipartite concurrence, we propose a GME measure, namely the geometric mean of bipartite concurrences (GBC).
\begin{definition}\label{def2}
For an arbitrary $n$-partite pure state $\ket{\psi}$, the GBC is defined by
\begin{equation}\label{eq4}
    \cG(\ket{\psi})=\sqrt[{c(\alpha)}]{\cP(\ket{\psi})}\,,
\end{equation}
where ${\alpha=\bigl\{\alpha_i\bigr\}}$ is the set that denotes all possible bipartitions $\{A_{\alpha_i}|B_{\alpha_i}\}$ of the $n$ parties, $c(\alpha)$ is the cardinality of $\alpha$, and $\cP(\ket{\psi})$ is the product of all bipartite concurrences. Explicitly, 
\begin{align}\label{eq5}
    \cP(\ket{\psi})=&\prod\limits_{\alpha_i\in\alpha}\cC_{A_{\alpha_i}B_{\alpha_i}}(\ket{\psi})\,,\\
    c(\alpha)=&\left\{\begin{aligned}
    &\sum_{m=1}^{(n-1)/2}\Bigl(\!
  \begin{array}{c}
    n \\
    m \\
  \end{array}
\!\Bigr) ,\,&\text{if $n$ is odd}\,,\\
    &\sum_{m=1}^{(n-2)/2}\Bigl(\!
  \begin{array}{c}
    n \\
    m \\
  \end{array}
\!\Bigr)+\frac{1}{2}\Bigl(\!
  \begin{array}{c}
    n/2 \\
    n \\
  \end{array}
\!\Bigr),\,&\text{if $n$ is even}\,.
    \end{aligned}\right.\nonumber
\end{align}
\end{definition}
We can define the GBC for an $n$-partite mixed state $\rho$ via the convex roof extension,
\begin{equation}\label{eq6}
    \cG(\rho)=\underset{p_i,\ket{\psi_{i}}}{\text{min}}\sum_{i}{p_i\,\cG\bigl(\ket{\psi_{i}}\bigr)}\,,
\end{equation}
where $\{p_i,\ket{\psi_i}\}$ is an ensemble realization of $\rho$.
With this, we have the following theorem.

\begin{theorem}\label{theorem1}
For an arbitrary $n$-partite quantum state, the GBC is a GME measure.
\end{theorem}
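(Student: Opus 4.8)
The plan is to check the three requirements of Definition~\ref{def1} in turn: that $\cG$ is an LOCC monotone (in fact a strong one, not increasing on average under selective local operations), that it vanishes on every biseparable state, and that it is strictly positive on every genuinely multipartite entangled state. Throughout I abbreviate $\cC_{\alpha_i}\equiv\cC_{A_{\alpha_i}B_{\alpha_i}}$. I would first settle the pure-state case and then transport everything to mixed states through the convex roof \eqref{eq6}. The two discriminance conditions are essentially free on pure states: if $\ket{\psi}$ is biseparable across some cut $\{A_{\alpha_i}|B_{\alpha_i}\}$, then $\rho_{A_{\alpha_i}}$ is pure, so $\tr(\rho_{A_{\alpha_i}}^2)=1$ and $\cC_{\alpha_i}(\ket{\psi})=0$ by \eqref{eq2}, whence $\cP(\ket{\psi})=0$ and $\cG(\ket{\psi})=0$; conversely, a genuinely entangled $\ket{\psi}$ is entangled across every cut, so every factor $\cC_{\alpha_i}(\ket{\psi})$ is positive and so is their geometric mean. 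Hence, for pure states, $\cG(\ket{\psi})=0$ exactly when $\ket{\psi}$ is biseparable.

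The crux is strong monotonicity on pure states: for a measurement performed by a single party that sends $\ket{\psi}$ to outcomes $\ket{\psi_j}$ with probabilities $q_j$, one needs $\sum_j q_j\,\cG(\ket{\psi_j})\le\cG(\ket{\psi})$; the statement for general LOCC then follows by concatenating such elementary steps, and I expect this to be the main obstacle. Two facts go into it. First, each $\cC_{\alpha_i}$ is itself a strong bipartite entanglement monotone: on the Schmidt simplex the linear entropy $1-\tr(\rho_{A_{\alpha_i}}^2)=1-\sum_k\lambda_k^2$ is symmetric and concave, and composing it with the non-decreasing concave map $x\mapsto\sqrt{d_\text{min}\,x/(d_\text{min}-1)}$ preserves concavity, so the standard criterion for pure-state bipartite monotones applies. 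Second, a measurement by one party is local with respect to \emph{every} bipartition (that party sits on one side of the cut), so the post-measurement ensemble satisfies $\sum_j q_j\,\cC_{\alpha_i}(\ket{\psi_j})\le\cC_{\alpha_i}(\ket{\psi})$ for all $\alpha_i$ at once. Distributing the weights as $q_j=\prod_{\alpha_i}q_j^{1/c(\alpha)}$ and invoking H\"older's inequality with all exponents equal to $c(\alpha)$ then gives
\begin{equation*}
 \sum_j q_j\,\cG\bigl(\ket{\psi_j}\bigr)\le\prod_{\alpha_i\in\alpha}\Bigl(\,\sum_j q_j\,\cC_{\alpha_i}\bigl(\ket{\psi_j}\bigr)\Bigr)^{\!1/c(\alpha)}\le\cG\bigl(\ket{\psi}\bigr)\,,
\end{equation*}
as desired. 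Since only the strong monotonicity of the individual factors was used, the same argument works for the geometric mean of any family of bipartite entanglement monotones.

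It remains to lift everything to mixed states. Convexity of $\cG$ and the fact that it does not increase on average under LOCC follow from the standard statement that the convex-roof extension of a pure-state strong monotone is a mixed-state entanglement monotone, combined with the displayed inequality and the optimal pure-state decomposition of the input. Vanishing on a biseparable $\rho$ is immediate, since $\rho$ then has an ensemble built from biseparable pure states, each contributing $\cG=0$ to the roof. For strict positivity on a genuinely entangled $\rho$, I would argue by contradiction: $\cG$ is continuous on pure states and, by Carath\'eodory, the minimization in \eqref{eq6} ranges over a compact set of bounded ensembles, so the minimum is attained; were it zero, every pure state in the optimal ensemble would satisfy $\cG=0$, hence be biseparable, forcing $\rho$ to be biseparable and contradicting genuine entanglement. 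Together these steps establish that $\cG$ obeys all the conditions of Definition~\ref{def1}.
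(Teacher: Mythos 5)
Your proposal is correct and follows essentially the same route as the paper: discriminance is checked cut by cut on pure states, strong monotonicity is obtained by combining the strong monotonicity of each bipartite concurrence with the superadditivity/concavity of the geometric mean (your generalized H\"older inequality is the same fact the paper invokes via Mahler's inequality), and the mixed-state case is handled by the convex roof. The only differences are cosmetic: you verify Vidal's concavity criterion for the concurrence where the paper cites a reference, and you spell out the compactness argument for strict positivity on genuinely entangled mixed states, which the paper merely asserts.
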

\begin{proof}
First, we argue that GBC satisfies the key requirement for an entanglement measure, namely, the (strong) monotonicity, that GBC must not increase on average under LOCC operations. Since GBC is a concave function of entanglement monotones, the monotonicity is directly inherited; see Appendix~A for the detailed proof.

Next, we prove that GBC satisfies the two conditions $(\text{a})$ and $(\text{b})$ for a GME measure as in Definition~\ref{def1}. For a biseparable  pure state $\ket{\psi}_{\text{BS}}$, if it is separable between a set of party $A$ and its complement $\Bar{A}$, then the bipartite concurrence under the bipartition ${\bigl\{A\bigl|\Bar{A}\bigr\}\bigr.}$ vanishes, thus we have ${\cG\bigl(\ket{\psi}_{\text{BS}}\bigr)=0}$. Also, for a biseparable mixed state $\rho_{\text{BS}}$, the optimal ensemble via the convex roof construction in Eq.~\eqref{eq6} is composed of biseparable pure states, thus $\cG\bigl(\rho_{\text{BS}}\bigr)=0$, which proves condition $(\text{a})$. A GME pure state $\ket{\psi}_{\text{GE}}$ is not biseperable under any bipartition of the parties, then all the bipartite concurrences are strictly positive, thus $\cG\bigl(\ket{\psi}_{\text{GE}}\bigr)>0$. Also for genuinely entangled mixed state $\rho_{\text{GE}}$, there does not exist an ensemble realization of $\rho_{\text{GE}}$ where all the pure states are biseparable, thus $\cG\bigl(\rho_{\text{GE}}\bigr)>0$, which proves condition $(\text{b})$.
\end{proof}

\begin{table*}[t]
\begin{center}
\caption{Properties of various GME measures. The concurrence fill \cite{PhysRevLett.127.040403} is defined for three-qubit quantum states only, thus fails the property of scalability. It is unknown for which state GGM \cite{PhysRevA.77.062304} takes its maximum value, so the normalization of GGM needs further investigation. There are cases when GGM and GMC \cite{PhysRevA.83.062325} exhibit sharp peaks for the family of states varying continuously, while the GBC is always smooth. The GBC is well-defined in terms of all the listed properties.}
    \begin{tabular}{|c|c|c|c|c|c|c|}
     \hline
     & ~Discriminance~ & ~Monotonicity~ & ~Convexity~ & ~Scalability~ & ~Normalization~ & ~Smoothness~ \\ \hline
    GGM & \checkmark & \checkmark & \checkmark & \checkmark & ? & \ding{53} \\\hline
    GMC & \checkmark & \checkmark & \checkmark & \checkmark & \checkmark & \ding{53} \\\hline
    ~Concurrence fill~ & \checkmark & \checkmark & \checkmark & \ding{53} & \checkmark & \checkmark \\\hline
    GBC & \checkmark & \checkmark & \checkmark & \checkmark & \checkmark & \checkmark \\
    \hline
\end{tabular}\label{tb:PropsofGME}
\end{center}
\end{table*}

A direct observation of Definition~\ref{def2} is that when comparing states in the same Hilbert space, which is what we're concerned with in this work, the regularization in Eq.~\eqref{eq2} will not influence the relative magnitude of GBC. The GBC takes its maximal value of $1$ for absolutely maximally entangled (AME) states~\cite{PhysRevA.86.052335}, and thus can be viewed as a \emph{proper} GME measure for the following reasons. Since the smaller subsystem of the AME states under arbitrary bipartitions is always maximally mixed, they are natural candidates for developing quantum secret sharing schemes~\cite{PhysRevA.86.052335} and quantum error correction codes~\cite{PhysRevA.69.052330}. Specifically, the three-qubit AME state, i.e., the GHZ state is maximally entangled in the sense that it can perform perfect quantum teleportation, while up to now the $W$ state, though being the maximally entangled state in the $W$ class, can only achieve a maximal success rate of $2/3$~\cite{Joo_2003}. Although AME states don't always exist~\cite{PhysRevLett.118.200502}, it is proven that when the local dimensions are chosen big enough, there are AME states for any number of parties~\cite{helwig2013absolutely}. Thus we have grounds for saying that a proper entanglement measure should be maximized for the AME states, if they exist.

%%%%%%
\section{Properties of GBC}%
The axiomatic approach for constructing entanglement measures suggests several advantageous properties~\cite{Geometry}. Here we examine the following six main properties for GME measures:
(1) Discriminance, which denotes the conditions (a) and (b) in Definition~\ref{def1} for GME measures.
(2) (Strong) monotonicity, that the entanglement cannot increase on average under LOCC operations
\begin{equation}\label{eq7}
    E(\rho) \ge \sum_i{p_i E\bigl(\sigma_i\bigr)}\,,
\end{equation}
where $\{p_i, \sigma_i\}$ is an ensemble produced by an arbitrary LOCC operation $\Phi_{\text{LOCC}}$, i.e.,
\begin{equation}
    \Phi_{\text{LOCC}}(\rho)=\sum_i p_i\sigma_i\,.
\end{equation}
(3) Convexity, that by mixing different quantum states the total amount of entanglement cannot increase
\begin{equation}
    E\bigl(\lambda\rho+(1-\lambda)\sigma\bigr)\le\lambda E(\rho)+(1-\lambda)E(\sigma)\,.
\end{equation}
If a measure is defined for pure states, then it can always be convex after the convex roof extension to mixed states.
(4) Scalability, that the measure is defined for arbitrary multipartite states, regardless of the local dimensions or the number of parties. 
(5) Normalization, that the measure is maximized for AME states. An interesting point is that it is still unknown for which state GGM can be maximized. In addition, we argue that GBC is more natural than GMC and GGM in view of (6) smoothness, that the measure does not create sharp peaks when measuring continuously varying pure states~\cite{PhysRevLett.127.040403}. Smoothness is absent for measures defined with the minimal argument like GGM and GMC, while the analytical expression of GBC guarantees that it is smooth. See Table~\ref{tb:PropsofGME} for a summary.

%%%%%%
\section{Entanglement ordering with GBC}%
As an entanglement monotone, GBC obtains the infallible entanglement ordering for any pair of quantum states that can transform from one to the other with deterministic LOCC. However, even states in the same SLOCC equivalent class can be incomparable with monotonicity~\cite{PhysRevLett.83.436}, thus different entanglement monotones lead to much varied results of entanglement ordering. In this section, we give illustrative examples of the GBC's applications when measuring multipartite entanglement and show that GBC possesses advantages over other GME measures in measuring certain types of entanglement.

We first investigate two typical permutation symmetric states that belong to different SLOCC equivalent classes: the GHZ states and the $W$ states~\cite{PhysRevA.62.062314}. The three-qubit GHZ states are more capable for quantum teleportation than the three-qubit $W$ states~\cite{Joo_2003}. Also, the $W$ states can be approximated by states in the GHZ class to arbitrary precision, but the reverse is not true~\cite{walter2017multipartiteentanglement}, thus the GHZ states can be regarded as more entangled than the $W$ states. Through analytical derivation and calculation, we find that for multi-qubit systems the GBC of GHZ states is strictly larger than that of $W$ states, which is consistent with the above observations. In addition, when the number of qubits approaches infinity, the ratio of GBC of GHZ states over the GBC of $W$ states approaches 1 asymptotically. See Appendix~B for the explicit derivations.

The GBC leads to different entanglement orderings as compared to other GME measures.  As an example we consider the following two types of three-qubit states
\begin{equation}\label{eq10}
\begin{aligned}
    \ket{\psi_A}&=\frac{1}{\sqrt{2}}\bigl(\cos\theta\ket{000}+\sin\theta\ket{100}\bigr)+\frac{1}{\sqrt{2}}\ket{111},\\
     \ket{\psi_B}&=\cos\theta\ket{000}+\sin\theta\ket{111},
\end{aligned}
\end{equation}
with ${\theta\in[0,\pi/2]}$. The GMC and GGM get exactly the same entanglement ordering for three-qubit states and can therefore be treated as a whole~\cite{PhysRevLett.127.040403}. As shown in Figs.~\ref{fig:concfill} and \ref{fig:GMC}, there exist pairs of a type-A state and type-B state that are indistinguishable with concurrence fill or GMC but contains different amounts of GBC, and vice versa. This can be shown by drawing vertical or horizontal lines, then compare the states at the intersection point. In addition, by introducing horizontal and vertical lines from a given typ-A state, we can specify a set of type-B states between the intersection points possessing less(more) GBC but more(less) concurrence fill(GMC) when compared with the type-A state. Similar results can be drawn by inspecting from a type-B state. 

\begin{figure}[t]
    \includegraphics[width=.95\columnwidth]{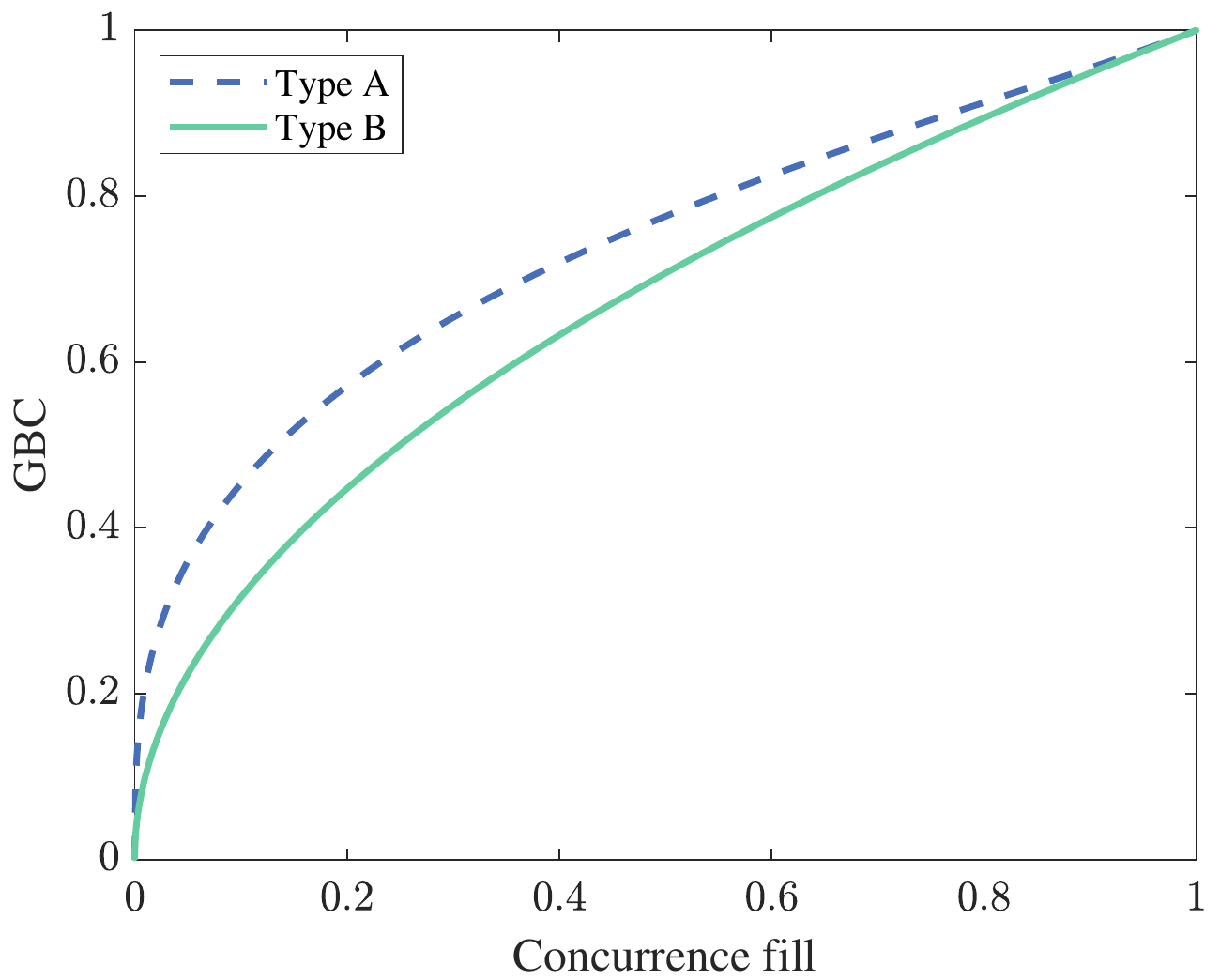}
    \caption{The GBC and concurrence fill for two different types of quantum states as in Eq.~\eqref{eq10}. Each point on the curve represents a three-qubit state. Given a type A state, we can find a set of type B states that have larger concurrence fill but smaller GBC.}
    \label{fig:concfill}
\end{figure}
\begin{figure}[t]
    \includegraphics[width=.95\columnwidth]{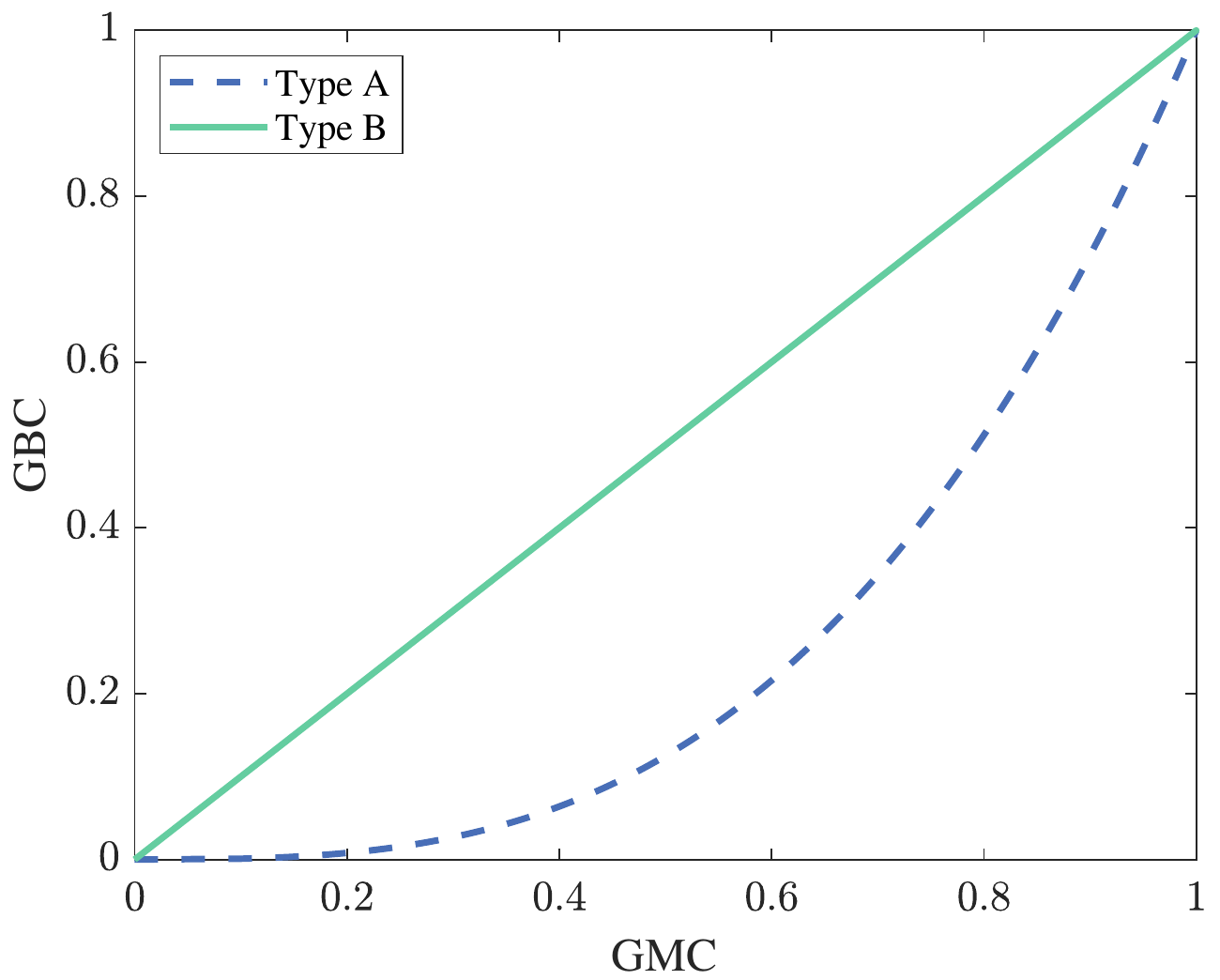}
    \caption{The GBC and GMC for two different types of quantum states as in Eq.~\eqref{eq10}. Each point on the curve represents a three-qubit state. Given a type-A state, we can find a set of type-B states that have smaller GMC but larger GBC. The same results are applicable to GGM, since for three-qubit states GGM and GBC are equivalent in entanglement ordering.}
    \label{fig:GMC}
\end{figure}
\begin{figure}[t]
    \includegraphics[width=.95\columnwidth]{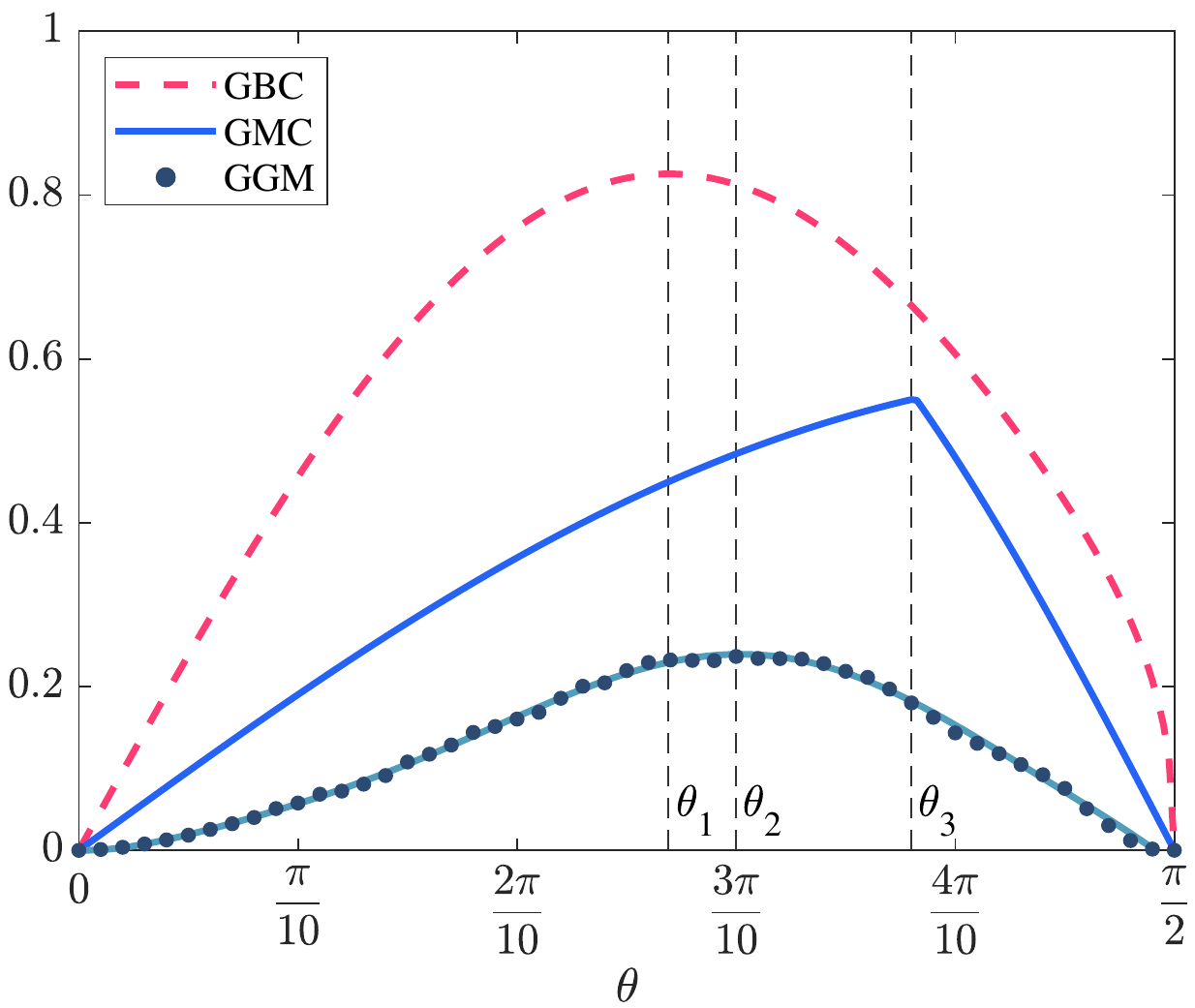}
    \caption{The GBC, GMC and GGM for the four-qubit type C states as in Eq.~\eqref{eq11}. Each point on the curve represents a type-C state with the corresponding $\theta$. The peak of GBC is at $\theta_1$, while the GMC has a sharp peak at $\theta_3$. It can be seen that when $\theta$ increases from $\theta_1$ to $\theta_3$, the GBC is decreasing while the GMC is increasing. We compute the GGM at 51 values of $\theta$ from $0$ to $\pi/2$ with numerical methods, and a peak $\theta_2$ inferred from the fitting curve is larger than $\theta_1$.}
    \label{fig:eg4Q}
\end{figure}

For another example, we consider the following family of four-qubit states
\begin{equation}\label{eq11}
\begin{aligned}
    \ket{\psi_C}=&\sin\theta\biggl(\cos\Bigl(\frac{3\pi}{5}\Bigr)\ket{0100}+\sin\Bigl(\frac{3\pi}{5}\Bigr)\ket{1000}\biggr)\\
    &+\cos\theta\ket{0011},
\end{aligned}
\end{equation}
with ${\theta\in[0,\pi/2]}$. As shown in Fig.~\ref{fig:eg4Q}, as $\theta$ increases from $\theta_1$ to $\theta_3$ the GBC is decreasing with $\theta$ while the GMC is increasing, thus for two arbitrary states in this range the entanglement ordering is opposite for GBC and GMC. We use the Gilbert algorithm \cite{PhysRevLett.120.050506,Gilbert} to evaluate the GGM based on quantum relative entropy at 51 values of $\theta$ sampled uniformly from $0$ to $\pi/2$, and a peak at $\theta_2$ can be inferred from the fitting curve, which is between the peaks of GBC and GGM. Therefore the entanglement ordering for type C states with $\theta\in[\theta_1,\theta_2]$ is unique for GBC. By looking at a pair of states with the same GMC or GGM, we are likely to find that they possess different GBC, meaning that GBC is able to detect the difference in entanglement while GMC or GGM fails to do so. Moreover, Fig.~\ref{fig:eg4Q} also demonstrates an example of smoothness. A sharp peak of GMC emerges with the varying $\theta$, while the GBC is smooth with a continuous slope.

%%%%%%
\section{Summary}%
In this work we have proposed a well-defined GME measure, the GBC, which applies to arbitrary multipartite states. We have shown that GBC is qualified for quantifying potential quantum resources, benefiting from desirable properties such as monotonicity and smoothness. By utilizing the symmetry in the definition of GBC, analytical results were obtained for $n$-qubit GHZ states and $W$ states, and the fact that GBC regards the GHZ state as more entangled than the $W$ state was shown to be proper in specific scenarios. With further explicit examples, we have demonstrated that GBC can distinguish the difference in certain features of GME while other GME measures cannot. Hence, GBC could serve as a complement to current GME measures. For future work, it's interesting to question whether experimentally implementable methods could be employed to estimate a lower bound for GBC~\cite{PhysRevApplied.13.054022}, and to find more potential applications of GBC in tasks involving GME.

\acknowledgments
We are grateful to Ye-Chao Liu, Hao-Sen Chen, and Yixuan Hu for helpful discussions.
This work was supported by the National Natural Science Foundation of China (Grants No.~12175014 and No.~11805010).

%%%%%%
\section*{Appendix A: Proof of the monotonicity of GBC}\label{app:properties}
In this appendix, we prove explicitly that GBC is an entanglement monotone, thus qualifies for the task of entanglement quantification. 

Bipartite concurrence satisfies the property of strong monotonicity~\cite{PhysRevA.74.052303}, i.e.,
\begin{equation}\label{eq14}
    \sum_i{p_i\cC_{AB}\bigl(\sigma_i\bigr)}\le\cC_{AB}(\rho)\,,
\end{equation}
where $\{p_i, \sigma_i\}$ is an ensemble produced by an arbitrary LOCC channel $\Phi_{\text{LOCC}}$ acting on the multipartite quantum state $\rho$, such that
\begin{equation}\label{eq15}
    \Phi_{\text{LOCC}}(\rho)=\sum_i{p_i\sigma_i}\,.
\end{equation}
Note that $\Phi_{\text{LOCC}}$ is also an LOCC channel with respect to the bipartition $\{A|B\}$.
First we examine the monotonicity for pure states, i.e., when $\rho$ and $\sigma_i$ are all pure states. Since all the bipartite concurrences in the definition of GBC satisfy Eq.~\eqref{eq14}, we have
\begin{equation}
\begin{aligned}
    \cG(\rho)=&\sqrt[{c(\alpha)}]{\cP(\rho)}\ge\sqrt[{c(\alpha)}]{\prod\limits_{\alpha_j\in\alpha}\Bigl(\sum_i{p_i\cC_{A_{\alpha_j}B_{\alpha_j}}\bigl(\sigma_i\bigr)\Bigr)}}\\
    \ge&\sum_i{p_i\Biggl(\sqrt[{c(\alpha)}]{\prod\limits_{\alpha_j\in\alpha}\cC_{A_{\alpha_j}B_{\alpha_j}}\bigl(\sigma_i\bigr)}\Biggr)}=\sum_i{p_i\cG\bigl(\sigma_i\bigr)}\,,
\end{aligned}
\end{equation}
where the concavity of the geometric mean function defined on the convex set with all the variables ranging from 0 to 1 is employed, which can be easily proven with the Mahler's inequality. 

Then, the monotonicity of mixed states can be naturally inherited from the monotonicity of pure states via the convex roof extension, as shown in Ref.~\cite{doi:10.1080/09500340008244048}. All measures constructed via the convex roof extension are convex~\cite{RevModPhys.81.865}. Furthermore, local unitary (LU) invariance is a necessary condition for monotonicity, thus GBC also satisfies the property of LU invariance.

%%%%%%
\section*{Appendix B: The GBC of $n$-qubit GHZ states and $W$ states}\label{app:GHZandW}
The general expression of an $n$-qubit GHZ state is
\begin{equation}
\begin{aligned}
    \ket{\text{GHZ}_n}&=\frac{1}{\sqrt{2}}\sum_{j=0}^{1}\ket{j}^{\otimes n}\,,\\
    \rho_{\text{GHZ}}&=\ket{\text{GHZ}_n}\bra{\text{GHZ}_n}=\frac{1}{2}\sum_{i,j=0}^{1}\ket{j}^{\otimes n}\bra{i}^{\otimes n}\,,
\end{aligned}
\end{equation}
of which the bipartite concurrence is
\begin{equation}
\begin{aligned}
    \cC_{AB}(\rho_\text{GHZ})&=\sqrt{\frac{d_\text{min}}{d_\text{min}-1}\bigl(1-\tr\bigl(\rho_A^2\bigr)\bigr)}\\
    &=\sqrt{\frac{d_\text{min}}{2(d_\text{min}-1)}}
\end{aligned}
\end{equation}
for an arbitrary bipartition $\bigl\{A\bigl|B\bigr\}\bigr.$ with $d_\text{min}$ denoting the dimension of the smaller subsystem. The product of all bipartite concurrences is
\begin{widetext}
 \begin{eqnarray} 
\begin{aligned}
    \cP\bigl(\ket{\text{GHZ}_n}\bigr)=&\left\{\begin{aligned}
    &\prod_{m=1}^{\frac{n-1}{2}}{\sqrt{\frac{2^m}{2(2^m-1)}}^{\tbinom{n}{m}}} \,,\,&\text{if $n$ is odd}\,,\\
    &\sqrt{\frac{2^{n/2}}{2(2^{n/2}-1)}}^{\frac{1}{2}\tbinom{n}{n/2}}\prod_{m=1}^{\frac{n-2}{2}}{\sqrt{\frac{2^m}{2(2^m-1)}}^{\tbinom{n}{m}}},\,&\text{if $n$ is even}\,.
    \end{aligned}\right.
\end{aligned}
  \end{eqnarray}
\end{widetext} 
Then the GBC for $\ket{\text{GHZ}_n}$ is given by
\begin{equation}
    \cG\bigl(\ket{\text{GHZ}_n}\bigr)=\sqrt[{c(\alpha)}]{\cP\bigl(\ket{\text{GHZ}_n}\bigr)}\,,
\end{equation}
where $c(\alpha)$ denotes the cardinality of $\alpha$, which is the set of all possible bipartitions.

The general expression of an $n$-qubit $W$ state is
\begin{equation}
\begin{aligned}
    \ket{W_n}&=\ket{n-1,1}\,,\\
    \rho_{W}&=\ket{W_n}\bra{W_n}=\ket{n-1,1}\bra{n-1,1}\,,    
\end{aligned}
\end{equation}
where $\ket{n-1,1}$ is the normalized fully symmetric state with $n-1$ zeros and $1$ one. For example, the three-qubit $W$ state is
\begin{equation}
    \ket{W_3}=\frac{1}{\sqrt{3}}\bigl(\ket{100}+\ket{010}+\ket{001}\bigr)\,.
\end{equation}
First we compute the one-to-other concurrence of $\ket{W_n}$ by tracing over an arbitrary qubit, such that
the reduced density matrix is 
\begin{equation}
\begin{aligned}
    \rho_1=&\braket{0|W_n}\braket{W_n|0}+\braket{1|W_n}\braket{W_n|1}\\
    =&\frac{n-1}{n}\ket{n-2,1}\bra{n-2,1}+\frac{1}{n}\ket{n-1,0}\bra{n-1,0}\,,
\end{aligned}    
\end{equation}
so the one-to-other concurrence is
\begin{equation}\label{eq22}
    \cC_{1,n-1}=\sqrt{2\bigl(1-\tr\bigl(\rho_1^2\bigr)\bigr)}
    =2\sqrt{\frac{n-1}{n^2}}\,.
\end{equation}
Similarly, the two-to-other concurrence can be computed by tracing over another qubit, such that the reduced density matrix is 
\begin{equation}
\begin{aligned}
    \rho_2=&\braket{0|\rho_1|0}+\braket{1|\rho_1|1}\\
    =&\frac{n-2}{n}\ket{n-3,1}\bra{n-3,1}+\frac{2}{n}\ket{n-2,0}\bra{n-2,0}\,,
\end{aligned}    
\end{equation}
so the two-to-other concurrence is
\begin{equation}
    \cC_{2,n-2}=\sqrt{\frac{4}{3}\bigl(1-\tr\bigl(\rho_2^2\bigr)\bigr)}
    =4\sqrt{\frac{n-2}{3n^2}}\,.
\end{equation}
By induction, the $m$-to-other concurrence is
\begin{equation}
    \cC_{m,n-m}=\sqrt{\frac{2^m}{2^m-1}\bigl(1-\tr\bigl(\rho_m^2\bigr)\bigr)}
    =\sqrt{\frac{(mn-m^2)2^{m+1}}{(2^m-1) n^2}}\,,
\end{equation}
where $m\le\frac{n}{2}$. Then the product of all the bipartite concurrences can be computed
\begin{widetext}
 \begin{eqnarray} 
\begin{aligned}
    \cP\bigl(\ket{\text{W}_n}\bigr)=&\left\{\begin{aligned}
    &\prod_{m=1}^{\frac{n-1}{2}}{\sqrt{\frac{(mn-m^2)2^{m+1}}{(2^m-1) n^2}}^{\tbinom{n}{m}}} \,,\,&\text{if $n$ is odd}\,,\\
    &\sqrt{\frac{2^{n/2}}{2(2^{n/2}-1)}}^{\frac{1}{2}\tbinom{n}{n/2}}\prod_{m=1}^{\frac{n-2}{2}}{\sqrt{\frac{(mn-m^2)2^{m+1}}{(2^m-1) n^2}}^{\tbinom{n}{m}}},\,&\text{if $n$ is even}\,.
    \end{aligned}\right.
\end{aligned}
  \end{eqnarray}
\end{widetext} 
Finally, the GBC for $\ket{W_n}$ is given by 
\begin{equation}
    \cG\bigl(\ket{W_n}\bigr)=\sqrt[c(\alpha)]{\cP\bigl(\ket{W_n}\bigr)}\,,
\end{equation}
with $c(\alpha)$ denoting the cardinality of $\alpha$, which is the set of all possible bipartitions.

Now we prove that when the number of qubits $n$ approaches infinity, the GBCs of the GHZ states and the $W$ states tend to be the same.
By looking at the ratio
\begin{equation}
\begin{aligned}
    \frac{\cP\bigl(\ket{W_n}\bigr)}{\cP\bigl(\ket{\text{GHZ}_n}\bigr)}=\prod_{m=1}^{\lfloor\frac{n}{2}\rfloor}{2\sqrt{\frac{mn-m^2}{n^2}}^{\tbinom{n}{m}}},
\end{aligned}
\end{equation}
we have
\begin{equation}
    \begin{aligned}
    \lim_{n\to+\infty} \frac{\cG\bigl(\ket{W_{2n+1}}\bigr)}{\cG\bigl(\ket{\text{GHZ}_{2n+1}}\bigr)}&=\sqrt[c(\alpha)]{\frac{\cP\bigl(\ket{W_{2n+1}}\bigr)}{\cP\bigl(\ket{\text{GHZ}_{2n+1}}\bigr)}}\\
    &\ge\lim_{n\to+\infty}\frac{\sum_{m=1}^{n}{\tbinom{2n+1}{m}}}{\sum_{m=1}^{n}\frac{1}{2}{\tbinom{2n+1}{m}}\sqrt{\frac{(2n+1)^2}{m(2n+1)-m^2}}}\\
    &=\lim_{n\to+\infty}\frac{{\tbinom{2n+1}{n}}}{\frac{1}{2}{\tbinom{2n+1}{n}}\sqrt{\frac{(2n+1)^2}{n(n+1)}}}\\
    &=1^-\,,
    \end{aligned}
\end{equation}
\begin{equation}
    \begin{aligned}
    \lim_{n\to+\infty} \frac{\cG\bigl(\ket{W_{2n}}\bigr)}{\cG\bigl(\ket{\text{GHZ}_{2n}}\bigr)}&=\sqrt[c(\alpha)]{\frac{\cP\bigl(\ket{W_{2n}}\bigr)}{\cP\bigl(\ket{\text{GHZ}_{2n}}\bigr)}}\\
    &\ge\lim_{n\to+\infty}\frac{\sum_{m=1}^{n-1}{\tbinom{2n}{m}}+\frac{1}{2}\tbinom{2n}{n}}{\sum_{m=1}^{n-1}\frac{1}{2}{\tbinom{2n}{m}}\sqrt{\frac{2n^2}{2mn-m^2}}+\frac{1}{2}\tbinom{2n}{n}}\\
    &=1^-\,,
    \end{aligned}
\end{equation}
\begin{figure}[t]
    \includegraphics[width=1\columnwidth]{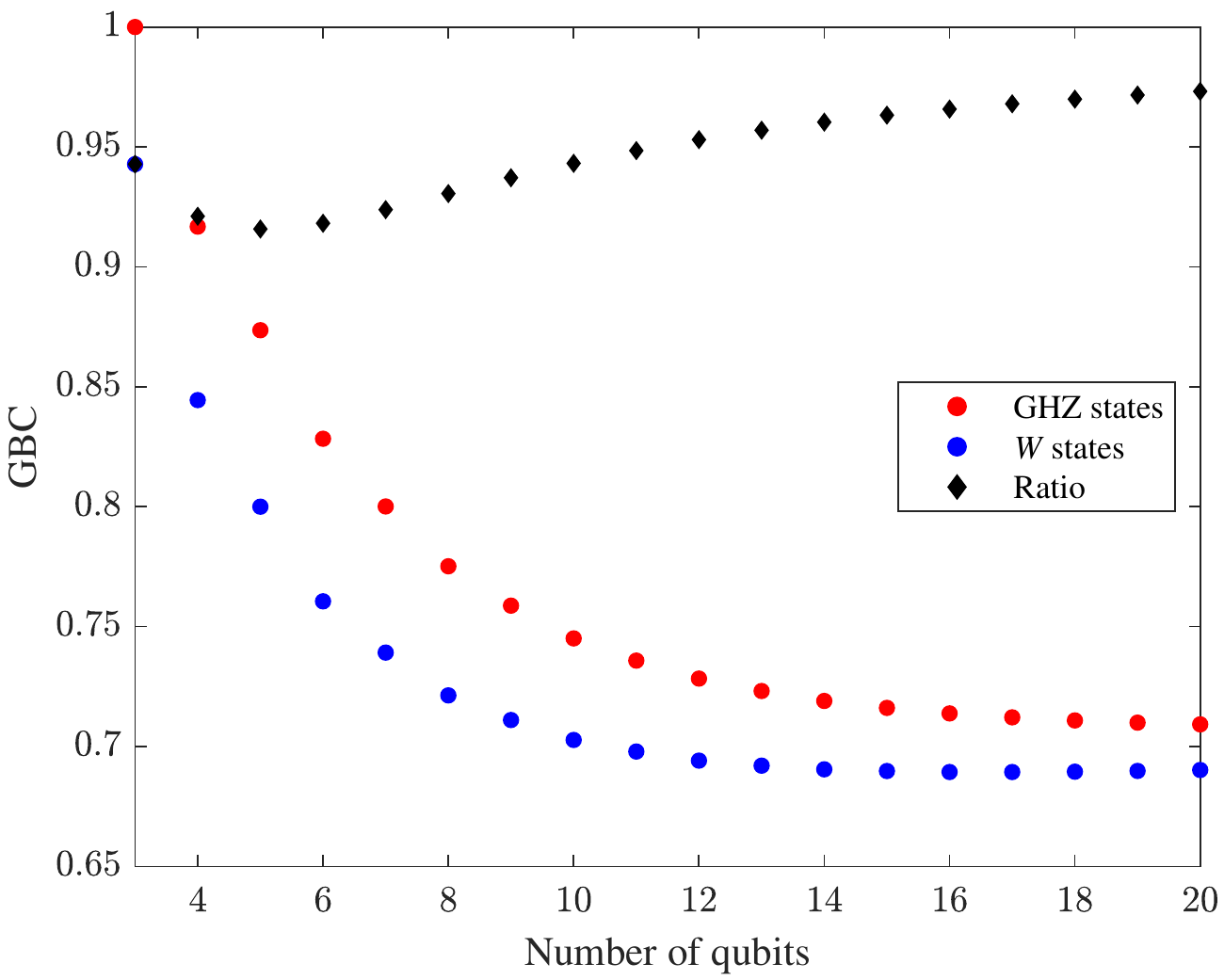}
    \caption{The GBC of $n$-qubit GHZ states and $W$ states. It can be seen that in a multi-qubit system the GHZ state is always more entangled than the $W$ state in terms of GBC, and the ratio of ${\cG\bigl(\ket{W_n}\bigr)}$ over ${\cG\bigl(\ket{\text{GHZ}_n}\bigr)}$ approaches 1 asymptotically.}
    \label{fig:GHZandW}
\end{figure}
where we have used the harmonic mean-geometric mean inequality and a variant of the Stolz–Ces\`aro theorem.
Thus
\begin{equation}
    \lim_{n\to+\infty} \frac{\cG\bigl(\ket{W_n}\bigr)}{\cG\bigl(\ket{\text{GHZ}_n}\bigr)}\ge1^-\,.
\end{equation}
Since $\cG\bigl(\ket{W_n}\bigr)$ is always smaller than $\cG\bigl(\ket{\text{GHZ}_n}\bigr)$, we have
\begin{equation}
    \lim_{n\to+\infty} \frac{\cG\bigl(\ket{W_n}\bigr)}{\cG\bigl(\ket{\text{GHZ}_n}\bigr)}=1^-\,.
\end{equation}
See Fig.~\ref{fig:GHZandW} for a visualization of the results.

%%%%%%
%%\bibliographystyle{apsrev4-2}
%\bibliography{GMEref}

%apsrev4-2.bst 2019-01-14 (MD) hand-edited version of apsrev4-1.bst
%Control: key (0)
%Control: author (8) initials jnrlst
%Control: editor formatted (1) identically to author
%Control: production of article title (0) allowed
%Control: page (0) single
%Control: year (1) truncated
%Control: production of eprint (0) enabled
%

%%%%%%

\end{document}